\DeclareSIUnit{\belmilliwatt}{Bm}
\DeclareSIUnit{\dBm}{\deci\belmilliwatt}
\DeclareSIUnit{\dBi}{dBi}
\DeclareMathOperator*{\maximize}{maximize}
\DeclareMathOperator*{\subjectto}{subject\,to}
\newacronym{swipt}{SWIPT}{simultaneous wireless information and power transfer}
\newacronym{wpt}{WPT}{wireless power transfer}
\newacronym{wpcn}{WPCN}{wireless powered communication network}
\newacronym{wit}{WIT}{wireless information transfer}
\newacronym{awgn}{AWGN}{additive white Gaussian noise}
\newacronym{tx}{TX}{transmitter}
\newacronym{rx}{RX}{receiver}
\newacronym{bs}{BS}{base station}
\newacronym{ir}{IR}{information receiver}
\newacronym{eh}{EH}{energy harvesting}
\newacronym{id}{ID}{information detection}
\newacronym{irs}{IRS}{intelligent reflected surface}
\newacronym{ap}{AP}{average power}
\newacronym{pp}{PP}{peak power}
\newacronym{aoa}{AOA}{angle-of-arrival}
\newacronym{aod}{AOD}{angle-of-departure}
\newacronym{eec}{EEC}{electical equivalent circuit}
\newacronym{fso}{FSO}{free space optical}
\newacronym{vlc}{VLC}{visible light communication}
\newacronym{led}{LED}{light emitting diode}
\newacronym{ook}{OOK}{On-Off keying}
\newacronym{pam}{PAM}{pulse amplitude modulated}
\newacronym{tdma}{TDMA}{time division multiple access}
\newacronym{ml}{ML}{maximum likelihood}
\newacronym{siso}{SISO}{single-input single-output}
\newacronym{mimo}{MIMO}{multiple-input multiple-output}
\newacronym{miso}{MISO}{multiple-input single-output}
\newacronym{simo}{SIMO}{single-input multiple-output}
\newacronym{rf}{RF}{radio frequency}
\newacronym{dc}{DC}{direct current}
\newacronym{ac}{AC}{alternating current}
\newacronym{papr}{PAPR}{peak-to-average power ratio}
\newacronym{lp}{LPF}{low-pass filter}
\newacronym{mc}{MC}{matching circuit}
\newacronym{rtd}{RTD}{resonant-tunelling diode}
\newacronym{mrt}{MRT}{maximum ratio transmission}
\newacronym{ecb}{ECB}{equivalent complex baseband}
\newacronym{tdd}{TDD}{time-division-duplex}
\newacronym{zf}{ZF}{zero forcing}
\newacronym{snr}{SNR}{signal-to-noise ratio}
\newacronym{sinr}{SINR}{signal-to-interference-plus-noise ratio}
\newacronym{rv}{RV}{random variable}
\newacronym{iid}{i.i.d.}{independent and identically distributed}
\newacronym{Pdf}{Pdf}{Probability density function}
\newacronym{pdf}{pdf}{probability density function}
\newacronym{cdf}{cdf}{cumulative density function}
\newacronym{dnn}{DNN}{dense neural networks}
\newacronym{mdp}{MDP}{Markov decision process}
\newacronym{sca}{SCA}{successive convex approximation}
\newacronym{sdr}{SDR}{semi-definite relaxation}
\newacronym{spr}{LP}{low power}
\newacronym{mpr}{MP}{medium power}
\newacronym{lpr}{HP}{high power}
\begin{document}

	\newtheorem{proposition}{Proposition}	
	\newtheorem{lemma}{Lemma}	
	\newtheorem{corollary}{Corollary}
	\newtheorem{assumption}{Assumption}	
	\newtheorem{remark}{Remark}	
	
	\title{Resonant Tunneling Diode-Based THz SWIPT for Microscopic 6G IoT Devices}

	\author{\IEEEauthorblockN{Nikita Shanin\IEEEauthorrefmark{1}, Simone Clochiatti\IEEEauthorrefmark{2}, Kenneth M. Mayer\IEEEauthorrefmark{1}, Laura Cottatellucci\IEEEauthorrefmark{1}, \\Nils Weimann\IEEEauthorrefmark{2}, and Robert Schober\IEEEauthorrefmark{1}}\\\vspace*{-5pt}
		\IEEEauthorblockA{\IEEEauthorrefmark{1}\textit{Friedrich-Alexander-Universit\"{a}t Erlangen-N\"{u}rnberg (FAU), Germany}\\ 
			\IEEEauthorrefmark{2}\textit{Universit\"{a}t Duisburg-Essen, Germany} } }

	\maketitle

\begin{abstract}
	In this paper, we study terahertz (THz) simultaneous wireless information and power transfer (SWIPT) for future micro-scale 6G Internet-of-Things (IoT) networks.
Since Schottky diodes are not efficient for THz energy harvesting (EH), we propose resonant tunneling diodes (RTDs) for EH at the IoT receiver (RX). 
As the electrical properties of RTDs are different from those of Schottky diodes, we develop a novel closed-form EH model for RTD-based RXs.
In particular, we model the dependency of the instantaneous RX output power on the instantaneous received power by a non-linear piecewise function, whose parameters are adjusted to fit circuit simulation results.
Furthermore, since coherent information detection is challenging at THz frequencies, we employ unipolar amplitude shift keying (ASK) modulation at the transmitter (TX) and utilize the RTD-based EH circuit at the RX to extract both information and energy from the received signal.
We formulate an optimization problem to maximize the mutual information between the TX and RX signals subject to constraints on the peak amplitude of the transmitted signal and the required average harvested power at the RX.
Moreover, we determine a feasibility condition for the formulated problem and, for high and low required average harvested powers, we derive the achievable information rate numerically and in closed form, respectively.
Our simulation results highlight a tradeoff between the information rate and the average harvested power.
Finally, we show that this tradeoff is determined by the peak amplitude of the transmitted signal and the maximum instantaneous harvested power for low and high received signal powers, respectively.
\end{abstract}
\vspace*{-5pt}
\section{Introduction}
\label{Section:Introduction}
One of the most disruptive use cases of future $6\text{G}$ communication systems are micro-scale networks, e.g., for biomedical, wearable, and industrial applications, where ultra-small low-power Internet-of-Things (IoT) devices communicate at extremely high data rates exceeding $\SI{100}{\giga\bit\per\second}$ \cite{Tataria2021, Kuscu2021, Yi2021}.
A key technology for these micro-scale networks is terahertz (THz)-band communication, where the required extremely high data rates are enabled by the huge available spectrum \cite{Tataria2021, Yi2021}.
Although THz-band antennas can be made ultra-small, the need to regularly replace bulky batteries does not allow the design of truly microscopic THz IoT devices.
A promising solution to this non-trivial problem is \gls*{swipt}, where both information and power are transmitted to the IoT devices in the downlink making battery replacements unnecessary \cite{Zhang2013, Kim2022, Clerckx2019, Boshkovska2015, Morsi2019}.

\gls*{swipt} systems utilizing \gls*{rf} GHz-band signals were considered in \cite{Zhang2013}.
Since \gls*{eh} circuits are not suitable for decoding information when phase modulation is utilized for information transmission, the authors of \cite{Zhang2013} proposed time-sharing and power-splitting between \gls*{eh} and information detection at the \gls*{rx} and showed a tradeoff between the achievable data rate and the average power harvested at the user device.
For the system design in \cite{Zhang2013}, the authors assumed a linear relationship between the received and harvested powers at the EH nodes.
However, the experimental results in \cite{Kim2022} validated that practical electrical circuits utilized for \gls*{eh} exhibit a highly non-linear behavior.
In fact, EH circuits are typically composed of a matching network, a non-linear rectifying diode, a low-pass filter, and a load resistance \cite{Clerckx2019}.
The non-linearity of EH circuits is determined by the non-linear forward-bias current-voltage (I-V) characteristic of the rectifying diode for low input powers, while, in the high input power regime, EH circuits are driven into saturation due to the breakdown of the employed diode \cite{Clerckx2019}.
To take the non-linear behavior of EH circuits into account, in \cite{Boshkovska2015} and \cite{Morsi2019}, the authors proposed non-linear EH models for the design of EH communication networks.
The authors of \cite{Boshkovska2015} proposed an EH model based on a parameterized sigmoidal function to describe the dependency between the \textit{average} harvested power and the \textit{average} input power at the EH node.
The parameters of the EH model in \cite{Boshkovska2015} were adjusted to fit circuit simulation results assuming a Gaussian distribution for the transmit signal.
Since the model in \cite{Boshkovska2015} does not allow the optimization of the transmit signal distribution, the authors of \cite{Morsi2019} accurately analyzed a single Schottky diode-based EH circuit, as they are typically utilized for RF EH, and derived a corresponding closed-form EH model to characterize the \textit{instantaneous} harvested power at a user device as a function of the \textit{instantaneous} received signal power.
Utilizing the proposed circuit-based EH model, in \cite{Morsi2019}, the authors studied SWIPT systems operating in the GHz frequency band with separate energy and information RXs and determined the optimal transmit signal distribution that maximizes the mutual information between the transmit signal and the signal received at the information RX under a constraint on the average harvested power at the EH node.

Although the EH model in \cite{Morsi2019} accurately characterizes the instantaneous harvested power at EH circuits employing a Schottky diode, it is not suitable for the signal design for SWIPT systems operating in the THz frequency band.
In fact, not only the characteristics of the wireless communication channels, but also the properties of the utilized electronic devices vary with frequency.
In particular, the Schottky diodes employed in EH circuits in the GHz frequency band \cite{Kim2022, Clerckx2019, Boshkovska2015, Morsi2019} are not efficient for EH and signal demodulation in THz SWIPT systems \cite{Villani2021}.
On the other hand, resonant tunneling diodes (RTDs) featuring a potential well and multiple quantum barriers have much smaller transient times and a larger curvature at the zero-bias point compared to Schottky diodes, which allows them to efficiently operate up to frequencies of $\SI{2}{\tera\hertz}$ \cite{Clochiatti2020, Villani2021}.
However, the experimental results in \cite{Villani2021} and \cite{Clochiatti2020} showed that the I-V forward-bias characteristic of RTDs is not only highly non-linear, but also exhibits multiple critical points, regions of negative resistance, and a dependency on the signal frequency.
Thus, RTDs differ substantially in their behavior from Schottky diodes which exhibit static and monotonic I-V characteristics \cite{Tietze2012}.
Although a complete and accurate model capturing the frequency dependency, non-linearity, and non-monotonicity of the current flow through an RTD is not available yet, in \cite{Clochiatti2020}, the authors developed a Keysight ADS \cite{ADS2017} design of an RTD, which fits the measurement data presented in \cite{Clochiatti2020} for a wide range of operating frequencies.
Finally, we note that unlike for GHz-band SWIPT, the design of a coherent information IoT \gls*{rx}, which detects the phase of the received THz signal, is challenging due to the instability and phase noise of THz local oscillators \cite{Yi2021}.
However, since the spectrum available in the THz band is significantly larger than in the GHz frequency band, communication systems employing unipolar amplitude shift keying (ASK) modulation are also able to achieve high data rates \cite{Yi2021}. 
Furthermore, since EH circuits are envelope detectors, EH circuits do not only harvest energy, but can also be exploited to extract the transmitted information from the received THz signal if unipolar ASK modulation is utilized.
To the best of the authors' knowledge, this is the first work to study THz SWIPT systems enabled by RTD-based EH.

In this paper, we study single-user THz SWIPT systems for 6G IoT networks.
The main contributions of this work can be summarized as follows.
We propose RTDs for EH at the IoT RX, and to characterize the instantaneous power of the output signal at the RX as a function of the instantaneous received power, we develop a novel non-linear piecewise EH model, whose parameters are adjusted to fit Keysight ADS circuit simulation results \cite{Clochiatti2020, ADS2017}.
Furthermore, we employ unipolar ASK modulation at the \gls*{tx} and utilize the RTD-based EH circuit at the RX to extract both information and energy from the received signal.
Based on the proposed EH model, we formulate an optimization problem for the maximization of the mutual information between the TX and RX signals subject to constraints on the peak amplitude of the transmitted signal and the average harvested power at the RX.
Moreover, we provide a feasibility condition for this optimization problem, and for high and low required average harvested powers, we determine the achievable information rate numerically and in closed form, respectively.
Our simulation results demonstrate a tradeoff between the information rate and the average harvested power.
Finally, we show that this tradeoff is determined by the peak amplitude of the transmitted signal and the maximum instantaneous harvested power for low and high received signal powers, respectively.

Throughout this paper, we use the following {\textit{notations}}.
We denote the sets of real, real non-negative, and non-negative integer numbers as $\mathbb{R}$, $\mathbb{R}_{+}$, and $\mathbb{N}$, respectively.
The real-part of a complex variable $x$ is denoted by $\mathcal{R} \{x\}$, whereas $j = \sqrt{-1}$ is the imaginary unit.
Function $f_s(s)$ denotes the \gls*{pdf} of random variable $s$.
$\mathbb{E}_s \{\cdot\}$ stands for statistical expectation with respect to random variable $s$.
The domain and first-order derivative of one-dimensional function $f(\cdot)$ are denoted by $\mathcal{D}\{f\}$ and $f'(\cdot)$, respectively.

\section{System Model}
\label{Section:SysModel}

\begin{figure}[!t]
	\centering
	\includegraphics[draft = false, width = 0.4\textwidth]{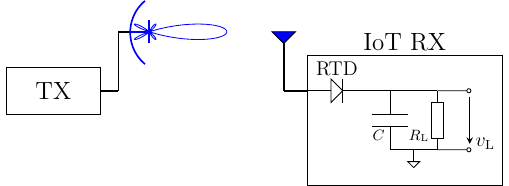}
	\vspace*{-7pt}
	\caption{THz SWIPT system model employing a TX with a highly directive antenna and a single-antenna IoT RX equipped with an \gls*{rtd}-based EH circuit.}
	\label{Fig:SystemModel}
	\vspace*{-10pt}
\end{figure}

We consider the \gls*{swipt} system in Fig.~\ref{Fig:SystemModel} where a \gls*{tx} equipped with a highly directive antenna sends a THz signal to a single-antenna microscopic IoT \gls*{rx}.
Since, in contrast to GHz-band signals, coherent demodulation of a THz-band received signal is challenging due to the instability and phase noise of the THz local oscillators, the TX employs unipolar ASK modulation for SWIPT \cite{Yi2021, Lemic2020}.
Thus, the THz signal $r(t) \in \mathbb{R}$ received at the RX can be expressed as follows:\vspace*{-3pt}
\begin{equation}
	r(t) = \sqrt{2} \mathcal{R} \{[h x(t) + n(t)] \exp(j2\pi f_c t) \},\vspace*{-3pt}
\end{equation}
\noindent where $h$ is the channel between TX and RX, which is assumed to be perfectly known at both devices, $f_c$ is the carrier frequency, and $x(t) = \sum_k s[k] \phi(t - kT)$ and $n(t)$ are \gls*{ecb} representations of the transmit signal and the \gls*{awgn} at the RX, respectively.
Furthermore, here, $s[k] \in \mathbb{R}_{+}, k \in \mathbb{N},$ are \gls*{iid} realizations of a non-negative random variable $s$ with \gls*{pdf} $f_s(s)$, $\phi(t)$ is a rectangular pulse that takes value $1$, if $t \in [0, T)$, and $0$, otherwise, and $T$ is the duration of a symbol interval.
To avoid signal distortion due to power amplifier non-linearities at the TX \cite{Morsi2019}, the maximum transmit signal amplitude is bounded by $A$.
Thus, the support of $f_s$ is confined to the interval $[0,A]$, i.e., $\mathcal{D}\{f_s\} \subseteq [0, A]$.

\paragraph*{Electrical circuit at the RX}

Since Schottky diodes may not be efficient for EH and signal demodulation at THz frequencies \cite{Villani2021}, we adopt an electrical EH circuit comprising an RTD, a capacitance $C$, and a load resistance $R_\text{L}$ \cite{Clerckx2019, Morsi2019}, as shown in Fig.~\ref{Fig:SystemModel}.
Neglecting the ripples of the output voltage caused by the non-ideality of the employed low-pass filter, we express the signal used for information decoding at the output of the RX $y[k] \in \mathbb{R}, \forall k \in \mathbb{N},$ as follows:
\begin{equation}
	y[k] \triangleq \frac{v_\text{L}[k]}{\sqrt{R_\text{L}}} = \sqrt{\psi( |h s[k]|^2) } + n[k], \vspace*{-5pt}
\end{equation}
\noindent where $v_\text{L}[k]$ is the \gls*{dc} voltage at the output of the EH circuit, i.e., at the resistance $R_\text{L}$, in time slot $k, k\in\mathbb{N},$ and $n[k]$ and $\psi(\cdot)$ are the equivalent output noise sample and the function that maps the received signal power $\rho[k] = | h s[k] |^2$ to the power harvested at the load $R_\text{L}$ in time slot $k, \forall k\in\mathbb{N},$ respectively.
We note that since unipolar ASK signals are adopted at the TX, the output RX signal is characterized by the input signal envelope and can be used\footnotemark\hspace*{0pt} not only for charging a load device, but also for information detection.
\footnotetext{In a practical RX design, one may use a DC voltage divider \cite{Tietze2012} to split the output DC current and utilize the portions $\nu y[k]$ and $(1-\nu) y[k]$ of the current flow $y[k], \forall k,$ for EH and information detection, respectively, where $\nu \in [0,1]$ is the current splitting ratio. However, since the performance of information detection does not depend on $\nu$, we assume that the DC current $y[k]$ is directly utilized for detection of the transmitted message.}
Function $\psi(\cdot)$, which models the RTD-based RX circuit in Fig.~\ref{Fig:SystemModel}, will be characterized in the next section.
The output noise $n[k]$ in time slot $k, k\in\mathbb{N},$ is composed of the received noise from external sources and the internal thermal noise generated by the components of the RX electrical circuit.
To characterize the performance of information decoding, we assume that the thermal noise originating from the RX components dominates and the dependency of $n[k]$ on the received power is negligible \cite{Lapidoth2009}.
Hence, we model the output noise samples $n[k], \forall k,$ as \gls*{iid} realizations of AWGN with zero mean and variance $\sigma^2$.

\begin{figure}[!t]
	\centering
	\includegraphics[draft = false, width=0.4\textwidth]{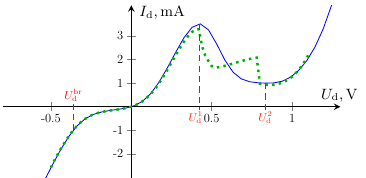}
	\vspace*{-10pt}
	\caption{I-V characteristic of the triple-barrier RTD design in \cite{Clochiatti2020} (blue solid line) matched to the measurement data from \cite{Clochiatti2020} (green dotted line).}
	\label{Fig:IV_Curve}
	\vspace*{-15pt}
\end{figure}
We note that an RTD has multiple quantum barriers, and thus, the I-V characteristic of an RTD includes regions of negative resistance \cite{Villani2021, Clochiatti2020}.
Specifically, in contrast to Schottky diodes, the current flow $I_\text{d}$ through the diode is not a monotonic increasing function of the applied voltage $U_\text{d}$, but the corresponding I-V curve may have multiple critical points, as, e.g., points\footnotemark\hspace*{0pt} $U^1_\text{d}$ and $U^2_\text{d}$ in Fig.~\ref{Fig:IV_Curve}.
Furthermore, for high input powers, the reverse-bias applied voltage $U_\text{d}$ may fall below the breakdown voltage $U_\text{d}^\text{br}$ and damage the device \cite{Clochiatti2020}.
\footnotetext{When the RTD is driven into a region of negative resistance, i.e., when $U_\text{d} \in [U^1_\text{d}, U^2_\text{d}]$, the RTD exhibits an additional RF gain \cite{Clochiatti2020}. The analysis of the impact of this RF gain on the efficiency of EH and information decoding is an interesting direction for further research and is beyond the scope of this paper.}
Since the current flow of an RTD depends on the signal frequency and is determined by quantum processes, an accurate analytical characterization of the I-V characteristic of an RTD is not available \cite{Villani2021}. 
Therefore, for our numerical results in Section IV, we employ the triple-barrier RTD design that was developed in \cite{Clochiatti2020} and fits the measured\footnotemark\hspace*{0pt} I-V characteristic and S-parameters of the diode designed in \cite{Clochiatti2020}, see Fig.~\ref{Fig:IV_Curve}. 
\footnotetext{We note that the discrepancy between the RTD I-V characteristic and measurement data for $U_\text{d} \in [U_\text{d}^1, U_\text{d}^2]$ in Fig.~\ref{Fig:IV_Curve} is caused by non-idealities (parasitic oscillations) of the measurement setup adopted in \cite{Clochiatti2020}.}

\paragraph*{EH model for the RTD-based RX circuit}

\begin{figure}[!t]
	\centering
	\includegraphics[draft = false, width=0.36\textwidth]{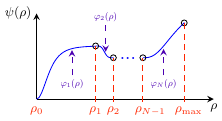}
	\vspace*{-10pt}
	\caption{Proposed general parameterized EH model for an RTD-based IoT RX.}
	\label{Fig:EH_Model}
	\vspace*{-5pt}
\end{figure}
In contrast to the Schottky diodes utilized for RF EH in \cite{Kim2022, Clerckx2019, Boshkovska2015, Morsi2019} and the related literature, the I-V characteristic of an RTD is not monotonic, as shown in Fig.~\ref{Fig:IV_Curve}.
Therefore, the instantaneous harvested power $P_\text{h} = \frac{v_\text{L}^2}{R_\text{L}}$ of an RTD-based EH circuit is also not a monotonic non-decreasing function of the input power $\rho = |hs|^2$ \cite{Villani2021, Clochiatti2020, Morsi2019}.
Moreover, since a closed-form expression for the current flow $I_\text{d}$ through an RTD is not available in the literature, the derivation of an accurate closed-form expression for the harvested power is not feasible \cite{Villani2021, Clochiatti2020}.
Therefore, as in \cite{Boshkovska2015}, in the following, we propose a \textit{general} \textit{parameterized} \textit{non-linear} EH model to characterize the instantaneous power harvested at the RX.
To this end, we model the dependency between the harvested power $P_\text{h}$ and the received signal power $\rho$ with the following piecewise function $\psi(\rho)$, see also Fig.~\ref{Fig:EH_Model}:\vspace*{-5pt}
\begin{equation}
	P_\text{h} \triangleq \psi (\rho) = \begin{cases}
		\varphi_1(\rho), \; &\text{if} \; \rho \in [\rho_0, \rho_1),\\
		\varphi_2(\rho), \; &\text{if} \; \rho \in [\rho_1, \rho_2),\\
		\cdots \\
		\varphi_N(\rho), \; &\text{if} \; \rho \in [\rho_{N-1}, \rho_\text{max}],
	\end{cases}
\label{Eqn:EHmodel}\vspace*{-3pt}
\end{equation} 
\noindent where function $\psi(\cdot)$ is defined in the domain $\mathcal{D}\{\psi\} = [\rho_0, \rho_\text{max}]$ and $\rho_\text{max}$ is the maximum value of the received signal power that does not drive the RTD into breakdown.
Here, the number $N \in \mathbb{N}$ of monotonic functions $\varphi_n(\cdot)$ with $\mathcal{D}\{\varphi_n\} = [\rho_{n-1}, \rho_n), n \in \{1,2,\cdots, N-1\},$ $\mathcal{D}\{\varphi_N\} = [\rho_{N-1}, \rho_\text{max}]$, and $0 \triangleq \rho_0 \leq \rho_1 \leq \rho_2 \leq \cdots \leq \rho_N \triangleq \rho_\text{max},$ that are needed for modelling $\psi(\cdot)$, depends on the number of critical points in the I-V characteristic and the breakdown voltage of the RTD.
Furthermore, since the I-V characteristic of an RTD alternates regions where $I_\text{d}$ increases and decreases when $U_\text{d}$ grows, as, for example, the intervals $(-\infty, U_\text{d}^1]$, $(U_\text{d}^1, U_\text{d}^2]$, and $(U_\text{d}^2, + \infty)$ in Fig.~\ref{Fig:IV_Curve}, we adopt parameterized monotonic increasing and decreasing functions $\varphi_n(\cdot)$ for odd and even values of $n$, i.e., $n \in \{1,3,\cdots\}$ and $n \in \{2,4,\cdots\}$ with $n \leq N$, respectively, as shown in Fig.~\ref{Fig:EH_Model}.
Finally, we express the average harvested power at the RX as a function of input pdf $f_s(s)$ as follows:\vspace*{-3pt}
\begin{equation}
	\bar{P}_\text{harv}(f_s) = \mathbb{E}_s \{ \psi(|hs|^2) \}. 
	\label{Eqn:AverageHarvestedPower} \vspace*{-3pt}
\end{equation}
\noindent Here, we neglect the impact of noise since its contribution to the average harvested power is negligible \cite{Boshkovska2015, Morsi2019}.

\paragraph*{Parameterized model of $\varphi_n(\cdot)$}
Since an accurate analytical derivation of functions $\varphi_n(\cdot), n\in\{1,2,\cdots, N\},$ is not feasible for an RTD-based EH circuit, for our numerical results in Section~\ref{Section:Results}, we model $\varphi_n(\cdot), \forall n,$ as a $5$-parameter logistic function as follows \cite{Gottschalk2005}:\vspace*{-5pt}
\begin{equation}
	\varphi_n(\rho) = B_n + (\Phi_n - B_n) \Big[ 1 + \theta_n (\rho - \rho_{n-1})^{\alpha_n} \Big]^{-\beta_n}.
	\label{Eqn:SigmoidalFunctionModel}\vspace*{-5pt}
\end{equation}
\noindent Here, $\Phi_n = \varphi_{n-1}(\rho_{n-1}), \forall n,$ with $\Phi_1 = 0$, $B_n = \lim_{\rho \to \infty} \varphi_n(\rho)$, and parameters $\alpha_n, \beta_n$, $\theta_n  \in \mathbb{R}_+$ characterize the non-linearity of $\varphi_n(\cdot), \forall n.$
As in \cite{Boshkovska2015}, these parameters can be obtained through a curve-fitting approach to optimize the matching between $\psi(\cdot)$ and measurement or simulation data.
Furthermore, as in \cite{Zhang2013, Kim2022, Clerckx2019, Boshkovska2015, Morsi2019}, we assume that all parameters of the RX circuit are perfectly known at the TX.

\vspace*{-5pt}
\section{Problem Formulation and Solution}
\label{Section:Problem}
In this section, we determine the tradeoff between the achievable information rate and the average harvested power at the RX.
To this end, we formulate the following optimization problem:\vspace*{-8pt}
\begin{subequations}
	\begin{align}
		\maximize_{f_s \in \mathcal{F}_{ \bar{A} }  }\quad \; &I(f_s) \label{Eqn:GeneralObj}\\
		\subjectto \quad\; & \bar{P}_\text{harv}(f_s) \geq \bar{P}^\text{req}_\text{harv}, \label{Eqn:GeneralOptConstr1}
	\end{align}
	\label{Eqn:GeneralOptimizationProblem}
\end{subequations}
\noindent\hspace*{-4pt}where we obtain the input pdf $f_s(\cdot)$ that maximizes the mutual information $I(f_s)$ between signals $s$ and $y$ subject to a constraint on the required average power\footnotemark\hspace*{0pt} $\bar{P}^\text{req}_\text{harv}$ harvested at the RX.
\footnotetext{The harvested power at the RX can be utilized, e.g., for sensing or signal processing tasks \cite{Morsi2019, Clerckx2019}. We assume that the harvested power $\bar{P}^\text{req}_\text{harv}$ needed to accomplish these tasks is known and a solution of optimization problem (\ref{Eqn:GeneralOptimizationProblem}) can be obtained at the TX.}
Here, $\mathcal{F}_{ \bar{A} } = \{f_s \; \vert \; \mathcal{D}\{f_s\} \subseteq [0, \bar{A}], \int_{s}f_s(s)\, \text{d}s = 1 \}$ denotes the set of feasible pdfs whose support does not exceed the maximum value $\bar{A} = \min \{A, \frac{ \sqrt{\rho_\text{max}} }{|h|}\}$, such that the transmit signal amplitude is upper-bounded by $A$ and the RTD is not driven into breakdown.

In the following proposition, we first determine the condition when (\ref{Eqn:GeneralOptimizationProblem}) is a feasible optimization problem.
\begin{proposition}
	For a given $\bar{A} = \min \{A, \frac{ \sqrt{\rho_\text{\upshape max}} }{|h|}\}$, optimization problem (\ref{Eqn:GeneralOptimizationProblem}) is feasible if and only if $\bar{P}^\text{\upshape req}_\text{\upshape harv} \in \Big[0, \bar{P}_\text{\upshape max} \Big]$ with $\bar{P}_\text{\upshape max} = \max_{\rho \in [0, |h\bar{A}|^2 ] } \psi(\rho)$.
	\label{Prop:Feasibility}
\end{proposition}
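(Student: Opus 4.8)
The plan is to prove the two implications separately. The guiding idea is that, over the admissible class $\mathcal{F}_{\bar A}$, the average harvested power $\bar{P}_\text{harv}(f_s)=\mathbb{E}_s\{\psi(|hs|^2)\}$ ranges over exactly the interval $[0,\bar{P}_\text{max}]$: its largest attainable value is the pointwise maximum of $\psi$ over the relevant received-power range, and every smaller value is realized by a suitable (possibly degenerate) input distribution.

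\emph{Necessity.} First I would observe that any $f_s\in\mathcal{F}_{\bar A}$ is supported on $[0,\bar A]$, so the argument $\rho=|hs|^2$ of $\psi$ stays in $[0,|h\bar A|^2]$, which lies inside $\mathcal{D}\{\psi\}=[0,\rho_\text{max}]$ precisely because $\bar A=\min\{A,\sqrt{\rho_\text{max}}/|h|\}$ forces $|h\bar A|^2\le\rho_\text{max}$. Hence $0\le\psi(|hs|^2)\le\bar{P}_\text{max}$ pointwise -- the upper bound by the definition of $\bar{P}_\text{max}$, the lower bound because a harvested power is nonnegative (in the model of (\ref{Eqn:SigmoidalFunctionModel}) one even has $\psi(0)=\Phi_1=0$). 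Taking expectations gives $0\le\bar{P}_\text{harv}(f_s)\le\bar{P}_\text{max}$ for every admissible $f_s$. Therefore, if (\ref{Eqn:GeneralOptimizationProblem}) admits a feasible point $f_s$, then $\bar{P}^\text{req}_\text{harv}\le\bar{P}_\text{harv}(f_s)\le\bar{P}_\text{max}$, and together with $\bar{P}^\text{req}_\text{harv}\ge0$ (which holds by the definition of a required harvested power) this yields $\bar{P}^\text{req}_\text{harv}\in[0,\bar{P}_\text{max}]$.

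\emph{Sufficiency.} Conversely, given $\bar{P}^\text{req}_\text{harv}\in[0,\bar{P}_\text{max}]$, I would exhibit an explicit feasible input. The key point is that $\psi$ is continuous on $[0,\rho_\text{max}]$: each logistic piece $\varphi_n$ in (\ref{Eqn:SigmoidalFunctionModel}) is continuous, and consecutive pieces match at every breakpoint because $\varphi_n(\rho_{n-1})=B_n+(\Phi_n-B_n)=\Phi_n=\varphi_{n-1}(\rho_{n-1})$. Consequently $\psi$ attains its maximum $\bar{P}_\text{max}$ on the compact set $[0,|h\bar A|^2]$, and, since $\psi(0)=0$, the intermediate value theorem provides some $\rho^\star\in[0,|h\bar A|^2]$ with $\psi(\rho^\star)=\bar{P}^\text{req}_\text{harv}$. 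Taking the degenerate input $f_s(s)=\delta\big(s-\sqrt{\rho^\star}/|h|\big)$, whose mass point lies in $[0,\bar A]$ since $\rho^\star\le|h\bar A|^2$, gives $f_s\in\mathcal{F}_{\bar A}$ with $\bar{P}_\text{harv}(f_s)=\psi(\rho^\star)=\bar{P}^\text{req}_\text{harv}$, so constraint (\ref{Eqn:GeneralOptConstr1}) is satisfied and (\ref{Eqn:GeneralOptimizationProblem}) is feasible.

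The argument is short; the two places that deserve attention are (i) verifying that the whole support of an admissible $f_s$ is mapped by $\psi$ into its domain -- exactly what the truncation $\bar A=\min\{A,\sqrt{\rho_\text{max}}/|h|\}$ guarantees -- and (ii) the continuity of $\psi$ at the interface points $\rho_1,\dots,\rho_{N-1}$, which rests on the normalization $\Phi_n=\varphi_{n-1}(\rho_{n-1})$ built into the model. I expect the only mild obstacle to be whether one admits a point mass as an input pdf; this is standard in this setting, but if a genuine density is insisted upon one may instead use the two-point mixture that places mass $\lambda=\bar{P}^\text{req}_\text{harv}/\bar{P}_\text{max}$ at a maximizer of $\psi(|hs|^2)$ and mass $1-\lambda$ at $s=0$ (which meets the requirement exactly because $\psi(0)=0$), or smooth the Dirac delta into a narrow bump and invoke continuity of $\bar{P}_\text{harv}$.
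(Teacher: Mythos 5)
Your proof is correct and follows essentially the same route as the paper's: necessity from the pointwise bound $\psi(|hs|^2)\le\bar{P}_\text{max}$ followed by taking expectations, and sufficiency by exhibiting the degenerate input $f_s=\delta(s-s_0)$ with $\psi(|hs_0|^2)=\bar{P}^\text{req}_\text{harv}$. You additionally justify the existence of such an $s_0$ via continuity of $\psi$ at the breakpoints and the intermediate value theorem, a step the paper leaves implicit.
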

\begin{proof}
Please refer to Appendix~\ref{Appendix:PropFeasibility}.
\end{proof}

Proposition~\ref{Prop:Feasibility} highlights that for a given peak amplitude ${ \bar{A} }$, the average power harvested at the RX is bounded by $\bar{P}_\text{\upshape max}$, and hence, a solution of (\ref{Eqn:GeneralOptimizationProblem}) does not exist for $\bar{P}^\text{\upshape req}_\text{\upshape harv} > \bar{P}_\text{\upshape max}$.
Since determining the optimal pdf that solves (\ref{Eqn:GeneralOptimizationProblem}) is challenging, in the following, for a given $\bar{A}$ and $\bar{P}^\text{\upshape req}_\text{\upshape harv} \in [0, \bar{P}_\text{\upshape max}]$, we derive the achievable mutual information as a suboptimal solution of (\ref{Eqn:GeneralOptimizationProblem}).

\begin{lemma}
	For any values of ${ \bar{A} }$ and $\bar{P}_\text{\upshape harv}^\text{\upshape req} \in \Big[0, \bar{P}_\text{\upshape max} \Big]$, the maximum mutual information $I^*_{\mathcal{F}_{ \bar{A} }}$ as a solution of (\ref{Eqn:GeneralOptimizationProblem}) is lower-bounded by 
	\begin{equation}
		I^*_{\bar{\mathcal{F}}_{ \bar{A} }} \triangleq \max_{f_s \in \bar{\mathcal{F}}_{ \bar{A} }} \, I(f_s) \geq \max_{f_s \in \bar{\mathcal{F}}_{ \bar{A} }} \; J(f_s) \triangleq J^*_{\bar{\mathcal{F}}_{ \bar{A} }},
		\label{Eqn:MI_LowerBound}
	\end{equation}
\noindent where $\bar{\mathcal{F}}_{ \bar{A} } = \{f_s \, \vert \, f_s \in \mathcal{F}_{ \bar{A} }, \bar{P}_\text{\upshape harv}(f_s) \geq \bar{P}^\text{\upshape req}_\text{\upshape harv} \}$ is the feasible set of (\ref{Eqn:GeneralOptimizationProblem}), $J(f_s) = \frac{1}{2} \ln\big(1 + \frac{e^{2 h_x(f_s)}}{2\pi e \sigma^2 }\big)$ is the achievable mutual information as function of pdf $f_s(\cdot)$, and $h_x(f_s)$ is the differential entropy of random variable $x = \sqrt{\psi( |h s|^2) }$ for a given $f_s(\cdot)$.
\label{Lemma:EPI}
\end{lemma}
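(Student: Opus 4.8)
The plan is to obtain the lower bound from the entropy power inequality (EPI) applied to the equivalent real scalar Gaussian-noise channel $y = x + n$, where $x = \sqrt{\psi(|hs|^2)}$ and $n \sim \mathcal{N}(0,\sigma^2)$ is independent of $s$. I would first observe that, by construction, $\bar{\mathcal{F}}_{\bar A}$ is precisely the set of $f_s \in \mathcal{F}_{\bar A}$ satisfying the average-harvested-power constraint \eqref{Eqn:GeneralOptConstr1}, i.e., it is exactly the feasible set of \eqref{Eqn:GeneralOptimizationProblem}, so that $I^*_{\bar{\mathcal{F}}_{\bar A}} = \max_{f_s \in \bar{\mathcal{F}}_{\bar A}} I(f_s)$ is the optimal value of \eqref{Eqn:GeneralOptimizationProblem}. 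It therefore suffices to establish the pointwise inequality $I(f_s) \geq J(f_s)$ for every $f_s \in \bar{\mathcal{F}}_{\bar A}$ and then take the maximum over $\bar{\mathcal{F}}_{\bar A}$ on both sides, which gives \eqref{Eqn:MI_LowerBound}.

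For the pointwise inequality I would proceed as follows. Write $I(f_s) = I(s;y) = h_y - h(y\,|\,s)$, where $h_y$ denotes the differential entropy of $y$. Conditioned on $s$, the channel output $y$ is Gaussian with mean $\sqrt{\psi(|hs|^2)}$ and variance $\sigma^2$, so $h(y\,|\,s) = \tfrac12\ln(2\pi e\sigma^2) \triangleq h_n$. Since $\mathcal{D}\{f_s\} \subseteq [0,\bar A]$ and $|h\bar A|^2 \leq \rho_\text{max}$, the variable $x$ is bounded (as $\psi$ is a finite concatenation of the bounded logistic pieces $\varphi_n$ over the compact domain $[0,\rho_\text{max}]$), whence $h_n \leq h_y \leq \tfrac12\ln(2\pi e\,\mathrm{Var}(y)) < \infty$ and all quantities involved are finite. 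Applying the EPI to the independent random variables $x$ and $n$ gives
\[
 e^{2 h_y} = e^{2 h(x+n)} \;\geq\; e^{2 h_x(f_s)} + e^{2 h_n} \;=\; e^{2 h_x(f_s)} + 2\pi e\sigma^2,
\]
so that $h_y \geq \tfrac12\ln\!\big(e^{2 h_x(f_s)} + 2\pi e\sigma^2\big)$, and subtracting $h_n$ yields $I(f_s) = h_y - h_n \geq \tfrac12\ln\!\big(1 + e^{2 h_x(f_s)}/(2\pi e\sigma^2)\big) = J(f_s)$, as claimed.

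The one place that needs care --- and the natural candidate for the main obstacle --- is the meaning of $h_x(f_s)$ when $f_s$ is such that $x = \sqrt{\psi(|hs|^2)}$ does not admit a density: because $\psi$ is non-monotone, $x$ is in general not a one-to-one transform of $s$, and for any discrete or mixed $f_s$ one has $h_x(f_s) = -\infty$, in which case $J(f_s) = \tfrac12\ln 1 = 0$, a trivial lower bound on the non-negative quantity $I(f_s)$; for all remaining $f_s$ the EPI applies as stated, so the pointwise bound holds in every case. I would also note that equality in the EPI would force $x$ to be Gaussian, which is precluded by the bounded support of $f_s$, so $J$ is a strict lower bound that becomes tight only as $\sigma^2 \to \infty$; this is exactly why $J^*_{\bar{\mathcal{F}}_{\bar A}}$ is used as an achievable-rate proxy for $I^*_{\bar{\mathcal{F}}_{\bar A}}$ in the sequel. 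Apart from this bookkeeping the argument is routine.
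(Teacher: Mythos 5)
Your proof is correct and follows essentially the same route as the paper: identify $I(f_s)=h_y(f_s)-h_n$ with $h_n=\tfrac12\ln(2\pi e\sigma^2)$, apply the entropy power inequality to the independent sum $y=x+n$ to get $h_y\geq\tfrac12\ln\big(e^{2h_x(f_s)}+e^{2h_n}\big)$, and maximize over $\bar{\mathcal{F}}_{\bar A}$. Your additional remarks on boundedness of $x$ and on the degenerate case $h_x(f_s)=-\infty$ are sensible bookkeeping the paper leaves implicit, but they do not change the argument.
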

\begin{proof}
\ifdefined\draftversion
	The corresponding proof is similar to that of \cite[Appendix A]{Lapidoth2009} and is omitted here due to the space constraints.
\else
	The proof follows along the lines of \cite[Appendix A]{Lapidoth2009}.
	Specifically, utilizing the entropy power inequality, we express the maximum mutual information in  (\ref{Eqn:GeneralOptimizationProblem}) as follows:
	\begin{align}
		I^*_{\bar{\mathcal{F}}_{ \bar{A} }} &\triangleq \max_{f_s \in \bar{\mathcal{F}}_{ \bar{A} }} \; I(f_s) = \max_{f_s \in \bar{\mathcal{F}}_{ \bar{A} }} \; h_y(f_s) - h_{n} \\
		&\geq \max_{f_s \in \bar{\mathcal{F}}_{ \bar{A} }} \; \frac{1}{2} \ln \big( e^{2 h_x(f_s)} + e^{2 h_n} \big) - h_{n} =  J^*_{\bar{\mathcal{F}}_{ \bar{A} }},
	\end{align}
	\noindent where $h_{y}(f_s)$ and $h_{n} = \frac{1}{2} \ln (2 \pi e \sigma^2)$ are the differential entropies of $y$ for a given pdf $f_s \in \bar{\mathcal{F}}_{ \bar{A} }$ and AWGN $n$, respectively.
	This concludes the proof.
\fi
\end{proof}

Lemma~\ref{Lemma:EPI} shows that the maximum mutual information $I^*_{\bar{\mathcal{F}}_{ \bar{A} }}$ as a solution of (\ref{Eqn:GeneralOptimizationProblem}) can be lower-bounded by the achievable mutual information $J^*_{\bar{\mathcal{F}}_{ \bar{A} }}$, which, in turn, is obtained as a solution of the optimization problem in (\ref{Eqn:MI_LowerBound}).
In the following, as a suboptimal solution of (\ref{Eqn:GeneralOptimizationProblem}), we determine $J^*_{\bar{\mathcal{F}}_{ \bar{A} }}$ for given ${ \bar{A} }$ and $\bar{P}_\text{\upshape harv}^\text{\upshape req} \in [0, \bar{P}_\text{\upshape max} ]$.
First, in the following proposition, we show that for small required average harvested powers $\bar{P}^\text{req}_\text{harv}$, constraint (\ref{Eqn:GeneralOptConstr1}) in the definition of $\bar{\mathcal{F}}_{ \bar{A} }$ can be relaxed and the achievable information rate $J^*_{\bar{\mathcal{F}}_{ \bar{A} }}$ can be obtained in closed form.

\begin{proposition}
	For a given $\bar{A} = \min \{A, \frac{ \sqrt{\rho_\text{\upshape max}} }{|h|}\}$ and required average harvested powers satisfying the constraint $\bar{P}^\text{\upshape req}_\text{\upshape harv} \leq \frac{1}{3} \bar{P}_\text{\upshape max},$ the achievable information rate is given by $J^*_{\bar{\mathcal{F}}_{ \bar{A} }} =  \frac{1}{2} \ln\big(1 + \frac{ \bar{P}_\text{\upshape max}}{2\pi e \sigma^2 }\big)$ and the corresponding pdf of $x$ is $f^*_x(x) = \frac{1}{\sqrt{\bar{P}_\text{\upshape max}}}$ with $\mathcal{D} \{f^*_x\} = [0, \sqrt{\bar{P}_\text{\upshape max}}]$.
	\label{Prop:UniformDistribution}
\end{proposition}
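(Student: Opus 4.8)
The plan is to use the fact that, by Lemma~\ref{Lemma:EPI}, $J(f_s)$ is a strictly increasing function of the differential entropy $h_x(f_s)$ of $x=\sqrt{\psi(|hs|^2)}$; hence maximizing $J$ over $\bar{\mathcal{F}}_{\bar A}$ is equivalent to maximizing $h_x(f_s)$ over the same set. I would first \emph{drop} the average‑power constraint~(\ref{Eqn:GeneralOptConstr1}) from the definition of $\bar{\mathcal{F}}_{\bar A}$, solve the resulting relaxed entropy‑maximization in closed form, and then verify a posteriori that, under the hypothesis $\bar P^{\text{req}}_{\text{harv}}\le\tfrac13\bar P_{\text{max}}$, the optimizer of the relaxed problem is already feasible for~(\ref{Eqn:GeneralOptimizationProblem}), so that the relaxation is tight. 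This matches the observation made after Proposition~\ref{Prop:Feasibility}.

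For the relaxed problem, the key point is that $x$ is supported in $[0,\sqrt{\bar P_{\text{max}}}]$ for \emph{every} $f_s\in\mathcal{F}_{\bar A}$: indeed $s\in[0,\bar A]$ forces $\rho=|hs|^2\in[0,|h\bar A|^2]\subseteq\mathcal{D}\{\psi\}$, and since $\psi\ge 0$ with $\bar P_{\text{max}}=\max_{\rho\in[0,|h\bar A|^2]}\psi(\rho)$, we get $\psi(\rho)\in[0,\bar P_{\text{max}}]$, i.e.\ $x\in[0,\sqrt{\bar P_{\text{max}}}]$. Among all densities supported on an interval of length $\sqrt{\bar P_{\text{max}}}$, the uniform density maximizes differential entropy, so $h_x(f_s)\le\ln\sqrt{\bar P_{\text{max}}}=\tfrac12\ln\bar P_{\text{max}}$, with equality exactly when $x$ is uniform on $[0,\sqrt{\bar P_{\text{max}}}]$, i.e.\ $f^*_x(x)=1/\sqrt{\bar P_{\text{max}}}$. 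Substituting $h_x=\tfrac12\ln\bar P_{\text{max}}$, i.e.\ $e^{2h_x}=\bar P_{\text{max}}$, into the expression for $J$ in Lemma~\ref{Lemma:EPI} yields $\tfrac12\ln(1+\bar P_{\text{max}}/(2\pi e\sigma^2))$, which is therefore an upper bound on $J^*_{\bar{\mathcal F}_{\bar A}}$.

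It then remains to exhibit a feasible $f_s$ attaining this bound. Existence of an $f_s\in\mathcal{F}_{\bar A}$ inducing $f^*_x$ follows because $\psi$ is continuous on $[0,|h\bar A|^2]$ (its pieces $\varphi_n$ are glued at the matching values $\Phi_n$) with $\psi(0)=\Phi_1=0$ and $\max\psi=\bar P_{\text{max}}$, so by the intermediate value theorem its image is $[0,\bar P_{\text{max}}]$; choosing, on the finitely many segments where $\psi$ is monotonic, the appropriate branches of $s\mapsto\sqrt{\psi(|hs|^2)}$ and applying the change‑of‑variables formula produces a density $f_s$ on $[0,\bar A]$ whose law pushed forward through $x=\sqrt{\psi(|hs|^2)}$ is uniform on $[0,\sqrt{\bar P_{\text{max}}}]$. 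The crux of why the threshold is $1/3$ is the moment computation: for this $f_s$, since $\psi(|hs|^2)=x^2$,
\begin{equation}
\bar P_{\text{harv}}(f_s)=\mathbb{E}_s\{x^2\}=\int_0^{\sqrt{\bar P_{\text{max}}}}\!\frac{t^2\,\mathrm{d}t}{\sqrt{\bar P_{\text{max}}}}=\frac{\bar P_{\text{max}}}{3}\ \ge\ \bar P^{\text{req}}_{\text{harv}},
\end{equation}
so $f_s\in\bar{\mathcal F}_{\bar A}$ and the constraint~(\ref{Eqn:GeneralOptConstr1}) is inactive; hence the upper bound is achieved and $J^*_{\bar{\mathcal F}_{\bar A}}=\tfrac12\ln(1+\bar P_{\text{max}}/(2\pi e\sigma^2))$.

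The main obstacle I anticipate is the rigorous construction of the feasible $f_s$ realizing the uniform law of $x$: because $\psi$ is only \emph{piecewise} monotonic (owing to the negative‑resistance regions of the RTD), one cannot invert it globally and must instead assemble $f_s$ from branch‑wise change‑of‑variables contributions, handling measurability at the breakpoints $\rho_n$ and the vanishing derivative of $s\mapsto\sqrt{\psi(|hs|^2)}$ at $s=0$. Everything else — the monotonicity of $J$ in $h_x$, the maximum‑entropy characterization of the uniform density, and the second‑moment computation that produces the factor $1/3$ — is routine.
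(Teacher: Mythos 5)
Your proposal is correct and follows essentially the same route as the paper's proof: relax the harvested-power constraint, invoke the maximum-entropy property of the uniform density on $[0,\sqrt{\bar{P}_\text{max}}]$, and observe that its second moment equals $\bar{P}_\text{max}/3$, so the constraint is inactive whenever $\bar{P}^\text{req}_\text{harv}\le\tfrac13\bar{P}_\text{max}$. You additionally spell out the branch-wise construction of an $f_s$ realizing the uniform law of $x$, a step the paper only asserts; that is a welcome, not a divergent, addition.
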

\begin{proof}
Please refer to Appendix~\ref{Appendix:UniformDistribution}.
\end{proof}

Proposition~\ref{Prop:UniformDistribution} demonstrates that if the required average harvested power $\bar{P}_\text{harv}^\text{req}$ is low, there is no tradeoff between the achievable information rate and the average harvested power and the corresponding $J^*_{\bar{\mathcal{F}}_{ \bar{A} }}$ can be computed in closed form.
In the next proposition, we consider the case $\bar{P}_\text{harv}^\text{req} \in [\frac{1}{3} \bar{P}_\text{\upshape max}, \bar{P}_\text{\upshape max}]$ and characterize the corresponding achievable information rate $J^*_{\bar{\mathcal{F}}_{ \bar{A} }}$.

\begin{proposition}
	For a given $\bar{A} = \min \{A, \frac{ \sqrt{\rho_\text{\upshape max}} }{|h|}\}$ and a required average harvested power $\bar{P}_\text{\upshape harv}^\text{\upshape req} \in [\frac{1}{3} \bar{P}_\text{\upshape max}, \bar{P}_\text{\upshape max}]$, the achievable information rate $J^*_{\bar{\mathcal{F}}_{ \bar{A} }}$ as a solution of the optimization problem in (\ref{Eqn:MI_LowerBound}) is given by \vspace*{-5pt}
	\begin{equation}
		J^*_{\bar{\mathcal{F}}_{ \bar{A} }} = \frac{1}{2} \ln\big(1 + \frac{e^{2 \mu_0 - 2\mu_2 \bar{P}_\text{\upshape harv}^\text{\upshape req} }}{2\pi e \sigma^2 }\big).
		\label{Eqn:OptMI}
	\end{equation}
\noindent  Here, $\mu_0 = \mu_2 \bar{P}_\text{\upshape max} + \ln(\frac{ \sqrt{\bar{P}_\text{\upshape max}} }{1 + 2\mu_2 \bar{P}_\text{\upshape harv}^\text{\upshape req} })$ and $\mu_2 \in \mathbb{R}_{+}$ is obtained as a solution of the following equation:
\begin{align}
	\ln(1 + 2 \mu_2 \bar{P}_\text{\upshape harv}^\text{\upshape req}) + \ln\big(Ei(\sqrt{\mu_2 \bar{P}_\text{\upshape max}} )\big) &= \nonumber \\
	\frac{1}{2} \ln\left(\frac{4 \bar{P}_\text{\upshape max} \mu_2 }{\pi} \right) &+ \mu_2 \bar{P}_\text{\upshape max},
	\label{Eqn:OptimalCoef}
\end{align}
\noindent where $Ei(\cdot)$ denotes the imaginary error function. Furthermore, the corresponding pdf of $x$ is $f^*_x(x) = \exp(- {\mu}_0 + {\mu}_2 x^2), x\in [0, \sqrt{\bar{P}_\text{\upshape max}} ]$,
\label{Prop:OptimalSolution}
\end{proposition}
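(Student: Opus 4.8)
The plan is to recast the problem in (\ref{Eqn:MI_LowerBound}) as a maximum differential entropy problem and then solve it in closed form. Since $J(f_s) = \frac{1}{2}\ln\big(1 + \frac{e^{2 h_x(f_s)}}{2\pi e \sigma^2}\big)$ is strictly increasing in the differential entropy $h_x(f_s)$ of $x = \sqrt{\psi(|hs|^2)}$, maximizing $J(f_s)$ over $\bar{\mathcal{F}}_{\bar{A}}$ is equivalent to maximizing $h_x(f_s)$. The constraints defining $\bar{\mathcal{F}}_{\bar{A}}$ translate into constraints on the law of $x$ alone: the amplitude and breakdown bounds confine $x$ to $[0, \sqrt{\bar{P}_\text{max}}]$, and since $\bar{P}_\text{harv}(f_s) = \mathbb{E}_s\{\psi(|hs|^2)\} = \mathbb{E}\{x^2\}$, constraint (\ref{Eqn:GeneralOptConstr1}) becomes $\mathbb{E}\{x^2\} \geq \bar{P}^\text{req}_\text{harv}$. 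Conversely, I would show that every distribution supported on $[0, \sqrt{\bar{P}_\text{max}}]$ is realizable by some admissible $f_s$: the function $\psi$ is continuous on $[0, |h\bar{A}|^2]$ with $\psi(0) = 0$ (from (\ref{Eqn:SigmoidalFunctionModel}) and $\Phi_1 = 0$) and $\max \psi = \bar{P}_\text{max}$, so its image is the whole interval $[0, \bar{P}_\text{max}]$; composing a measurable right inverse of $\psi$ with $s = \sqrt{\rho}/|h|$ then maps any target law of $x$ to a feasible pdf. Hence (\ref{Eqn:MI_LowerBound}) reduces to maximizing the differential entropy $h(X)$ over random variables $X$ supported on $[0, L]$, $L \triangleq \sqrt{\bar{P}_\text{max}}$, subject to $\mathbb{E}\{X^2\} \geq \bar{P}^\text{req}_\text{harv}$.

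Because $\bar{P}^\text{req}_\text{harv} \geq \frac{1}{3}\bar{P}_\text{max} = \frac{1}{3}L^2$, which is precisely the second moment of the uniform law on $[0, L]$ (the maximizer when the moment constraint is dropped, cf. Proposition~\ref{Prop:UniformDistribution}), the second-moment constraint is active at the optimum, i.e., $\mathbb{E}\{X^2\} = \bar{P}^\text{req}_\text{harv}$. The maximum-entropy density subject to bounded support and a prescribed second moment is of exponential-family form, $f^*_x(x) = \exp(-\mu_0 + \mu_2 x^2)$ on $[0, L]$, and $\mu_2 > 0$ since the required second moment exceeds $\frac{1}{3}L^2$. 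Optimality then follows from the standard Gibbs/relative-entropy argument: for any feasible density $f$ of $X$, $h(f) = -\int f \ln f \leq -\int f \ln f^*_x = \mu_0 - \mu_2 \mathbb{E}_f\{X^2\} \leq \mu_0 - \mu_2 \bar{P}^\text{req}_\text{harv}$, with equality if and only if $f = f^*_x$. Substituting the resulting $h_x^* = \mu_0 - \mu_2 \bar{P}^\text{req}_\text{harv}$ back into $J(f_s)$ yields (\ref{Eqn:OptMI}).

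It then remains to pin down $\mu_0$ and $\mu_2$ from the normalization and moment equations $\int_0^L e^{-\mu_0 + \mu_2 x^2}\, \text{d}x = 1$ and $\int_0^L x^2 e^{-\mu_0 + \mu_2 x^2}\, \text{d}x = \bar{P}^\text{req}_\text{harv}$. Using $\int_0^L e^{\mu_2 x^2}\, \text{d}x = \frac{\sqrt{\pi}}{2\sqrt{\mu_2}} Ei(\sqrt{\mu_2}\, L)$, the normalization gives $e^{\mu_0} = \frac{\sqrt{\pi}}{2\sqrt{\mu_2}} Ei(\sqrt{\mu_2}\, L)$. For the moment equation I would integrate by parts with $u = x$ and $\text{d}v = x e^{\mu_2 x^2}\, \text{d}x$, then use the normalization identity to eliminate the remaining integral; after rearranging this gives $e^{\mu_0} = L e^{\mu_2 L^2}/(1 + 2\mu_2 \bar{P}^\text{req}_\text{harv})$, and taking logarithms with $L^2 = \bar{P}_\text{max}$ produces the stated expression for $\mu_0$. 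Equating the two expressions for $e^{\mu_0}$ and taking logarithms then yields exactly (\ref{Eqn:OptimalCoef}). Finally, to guarantee a root $\mu_2 \in \mathbb{R}_+$, I would argue that the map $\mu_2 \mapsto \mathbb{E}_{\mu_2}\{X^2\}$ for the tilted family proportional to $e^{\mu_2 x^2}$ on $[0, L]$ is continuous and strictly increasing (its derivative equals a variance), ranging from $\frac{1}{3}L^2$ as $\mu_2 \downarrow 0$ to $L^2$ as $\mu_2 \to \infty$; hence a unique admissible $\mu_2$ exists for every $\bar{P}^\text{req}_\text{harv} \in [\frac{1}{3}\bar{P}_\text{max}, \bar{P}_\text{max})$, and it is precisely the root of (\ref{Eqn:OptimalCoef}). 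The main obstacle is the reduction step — establishing realizability of arbitrary laws of $x$ through the non-monotone piecewise $\psi$, together with verifying that the power constraint is active — after which the analysis is the textbook maximum-entropy derivation followed by routine special-function algebra.
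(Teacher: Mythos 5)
Your proof is correct and follows essentially the same route as the paper: reduce (\ref{Eqn:MI_LowerBound}) to a maximum-differential-entropy problem over the law of $x$ on $[0,\sqrt{\bar{P}_\text{max}}]$ with an active second-moment constraint, identify the exponential-family maximizer $f^*_x(x)=\exp(-\mu_0+\mu_2 x^2)$, and recover $\mu_0$ and equation (\ref{Eqn:OptimalCoef}) from the normalization and moment identities. You in fact supply details the paper omits --- the realizability of arbitrary laws of $x$ through the non-monotone $\psi$, a Gibbs-inequality optimality argument in place of the bare appeal to KKT, and the monotonicity argument for existence and uniqueness of $\mu_2$, which the paper only confirms numerically.
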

\begin{proof}
Please refer to Appendix~\ref{Appendix:OptimalSolution}.
\end{proof}

Proposition~\ref{Prop:OptimalSolution} shows that if $\bar{P}_\text{\upshape harv}^\text{\upshape req} \in [\frac{1}{3} \bar{P}_\text{\upshape max}, \bar{P}_\text{\upshape max}]$, there is a tradeoff between the achievable information rate and the average harvested power and, for given ${ \bar{A} }$ and $\bar{P}_\text{\upshape harv}^\text{\upshape req}$, the achievable information rate $J^*_{\bar{\mathcal{F}}_{ \bar{A} }}$ is given by (\ref{Eqn:OptMI}), where $\mu_2$ is a solution of (\ref{Eqn:OptimalCoef}).
We note that although determining a closed-form solution of (\ref{Eqn:OptimalCoef}) is not feasible in general, $\mu_2$ can always be obtained numerically\footnotemark\hspace*{0pt} with any desired precision error via a one-dimensional grid search \cite{Coope2001}.
\footnotetext{In our exhaustive numerical simulations, for any ${ \bar{A} }$ and $\bar{P}_\text{\upshape harv}^\text{\upshape req}$, we could always find a unique solution of (\ref{Eqn:OptimalCoef}).}

\vspace*{-7pt}
\section{Numerical Results}
\label{Section:Results}

\begin{table}[!t]
	\centering
	\caption{Tuned parameters of the EH model in (\ref{Eqn:EHmodel}).}
	\begin{tabular}{|m{0.2\textwidth} | m{0.2\textwidth}|}
		\hline 
		\multicolumn{2}{|c|}{$N = 2, \rho_0 = \SI{0}{\milli\watt}, \rho_1 = \SI{1.8}{\milli\watt}, \rho_\text{max} = \SI{2.4}{\milli\watt}$}\\
		\hline
		\multicolumn{1}{|c|}{$\varphi_1(\cdot)$} & 
		\multicolumn{1}{|c|}{$\varphi_2(\cdot)$} \\
		\hline
		\makecell[l]{$B = 7.16 \cdot 10^{-5}, \alpha = 1.432,$\\ $\beta = 0.778, \theta = 2174.86$ } & \makecell[l]{$B = 2.5 \cdot 10^{-5}, \alpha = 1.841,$\\ $\beta = 0.445, \theta = 956.75$ }\\
		\hline
	\end{tabular}
	\label{Table_EhModel}
	\vspace*{-10pt}
\end{table}
In the following, we determine the tradeoff between the mutual information and the average harvested power via simulations.
We assume a line-of-sight between the TX and RX and model the channel as $h = \tilde{h} \hat{h}$, where $\tilde{h} = \frac{c_{l}}{4 \pi d f_\text{c}} \sqrt{G_\text{T} G_\text{R} }$ and $\hat{h}$ are the large- and small-scale channel coefficients, respectively.
Here, $c_{l}$ and $f_\text{c} = \SI{300}{\giga\hertz}$ are the speed of light and carrier frequency, respectively.
Furthermore, to efficiently charge the microscopic THz IoT RX, we set the TX and RX antenna gains and distance between the TX and RX to $G_\text{T} = \SI{30}{\dBi}, G_\text{R} = \SI{10}{\dBi},$ and $d = \SI{0.3}{\meter}$, respectively.
We model the small-scale fading coefficient $\hat{h}$ as a Rician distributed random variable with Rician factor $1$ \cite{Morsi2019, Clerckx2019}.
The noise variance at the output of the RX is set to $\sigma^2 = \SI{-50}{\dBm}$.
We average all simulation results over $1000$ channel realizations.

First, we tune the parameters of the proposed general EH model $\psi(\rho)$ to match circuit simulation results.
To this end, we model the RTD-based EH circuit in Fig.~\ref{Fig:SystemModel} with the circuit simulation tool Keysight ADS \cite{ADS2017}.
In particular, we utilize the RTD design developed in \cite{Clochiatti2020}, which was matched to the measurement data reported in \cite{Clochiatti2020}, see Fig.~\ref{Fig:IV_Curve}.
The tuned EH model $\psi(\cdot)$ is shown in Fig.~\ref{Fig:MatchedEhModel} and the corresponding model parameters are summarized in Table~\ref{Table_EhModel}.

We observe in Fig.~\ref{Fig:MatchedEhModel} that $N=2$ functions $\varphi_n(\cdot)$ are sufficient to model $\psi(\cdot)$.
Furthermore, as expected, the instantaneous harvested power $P_\text{h}$ grows with the input power for $\rho \in [0, \rho_1)$.
Then, the instantaneous harvested power decreases when the input power $\rho \in [\rho_1, \rho_\text{max}]$ grows until the maximum power level $\rho=\rho_\text{max}$ is reached.
This behavior substantially differs from Schottky diode-based EH circuits, where the instantaneous harvested power is characterized by a monotonic non-decreasing function of the input power \cite{Kim2022, Clerckx2019, Boshkovska2015, Morsi2019}.
Finally, for high input powers exceeding $\rho = \rho_\text{max}$, the RTD is driven into breakdown, which has to be avoided since, when operating in the breakdown regime, the diode may be destroyed by a large reverse-bias current \cite{Tietze2012}, as shown in Fig.~\ref{Fig:IV_Curve}.

In Fig.~\ref{Fig:Results_Tradeoffs}, we plot the mutual information and average harvested power for different values of the maximum amplitude of the transmitted signal ${ \bar{A} }$.
To this end, for different values of required harvested power $\bar{P}_\text{harv}^\text{req} \in [0, \frac{1}{3} \bar{P}_\text{\upshape max}]$ and $\bar{P}_\text{harv}^\text{req} \in [\frac{1}{3} \bar{P}_\text{\upshape max}, \bar{P}_\text{\upshape max}]$, we first obtain the maximum achievable rates $J^*_{\bar{\mathcal{F}}_{ \bar{A} }}$ in Propositions~\ref{Prop:UniformDistribution} and \ref{Prop:OptimalSolution}, respectively.
Next, for the pdfs $f_s^*$ that yield $J^*_{\bar{\mathcal{F}}_{ \bar{A} }}$, we also plot in Fig.~\ref{Fig:Results_Tradeoffs} the mutual information $I(f_s^*)$.
Also, as a baseline scheme, transmit symbols follow a Gaussian distribution truncated to the interval $[0, { \bar{A} }]$ and with mean $\frac{{ \bar{A} }}{2}$.
In particular, for a given value of ${ \bar{A} }$, the corresponding values of mutual information $I(f_s)$ and average harvested power $\bar{P}_\text{harv}(f_s)$ for the baseline scheme in Fig.~\ref{Fig:Results_Tradeoffs} are obtained by adjusting the variance $\sigma_\text{s} \in [0, +\infty)$ of the truncated Gaussian distribution~$f_s$.

\begin{figure}[!t]
	\centering
	\includegraphics[draft = false, width=0.42\textwidth]{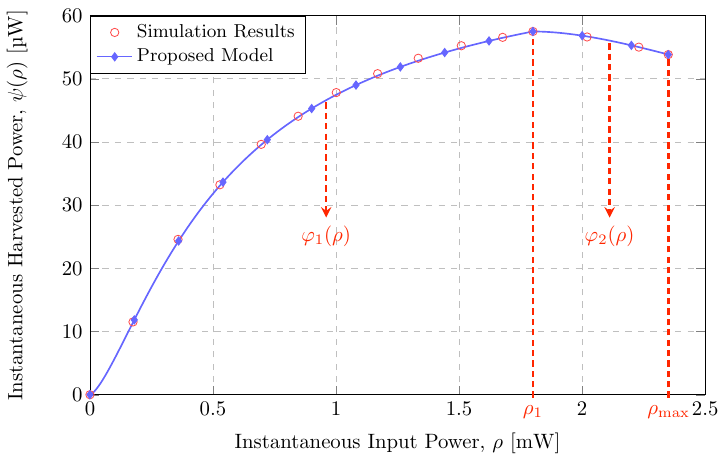}
	\vspace*{-7pt}
	\caption{Proposed EH model tuned to match the circuit simulation results.}
	\label{Fig:MatchedEhModel}
	\vspace*{-10pt}
\end{figure}

We observe in Fig.~\ref{Fig:Results_Tradeoffs} that for any given required average harvested power $\bar{P}_\text{harv}^\text{req}$, the proposed SWIPT system is able to achieve significantly higher information rates than the baseline scheme.
Thus, we conclude that Gaussian distributed signals are highly suboptimal, and an optimal design of the transmit symbol waveform is needed for efficient THz SWIPT.
Furthermore, we also observe in Fig.~\ref{Fig:Results_Tradeoffs} that for all ${ \bar{A} }$ and $\bar{P}_\text{harv}^\text{req}$, the gap between the mutual information $I(f_s^*)$ and the achievable rate $J^*_{\bar{\mathcal{F}}_{ \bar{A} }}$ is small.
For all considered values of ${ \bar{A} }$, both the maximum achievable rate $J^*_{\bar{\mathcal{F}}_{ \bar{A} }}$ and the mutual information $I(\cdot)$ decrease as the required average harvested power increases.
Thus, for any peak amplitude ${ \bar{A} }$, there is a tradeoff between the achievable information rate $J^*_{\bar{\mathcal{F}}_{ \bar{A} }}$ and the harvested power $\bar{P}_\text{harv}(\cdot)$ that is characterized by the rate-power region in Fig.~\ref{Fig:Results_Tradeoffs}.
Moreover, we note that for low maximum transmit signal amplitudes, which satisfy ${ \bar{A} } < \frac{\sqrt{\rho_1}}{|h|}$, both the mutual information and the harvested power grow with ${ \bar{A} }$.
However, if the maximum received power at the RX satisfies $ { \bar{A} } \geq \frac{\sqrt{\rho_1}}{|h|}$, i.e., for ${ \bar{A} } \in \{\SI{0.75}{\volt}, \SI{1}{\volt}\}$ in Fig.~\ref{Fig:Results_Tradeoffs}, the boundary of the rate-power regions does not change, and thus, the tradeoff between the mutual information and average harvested power of the THz SWIPT system is determined by the maximum instantaneous harvested power $\psi(\rho_1)$ in Fig.~\ref{Fig:MatchedEhModel} and not by the value of ${ \bar{A} }$.
\vspace*{-5pt}

\section{Conclusions}
In this work, we studied SWIPT for micro-scale 6G THz IoT systems and proposed RTDs for EH and unipolar ASK for information and power transfer.
Furthermore, we developed a general non-linear piecewise EH model, whose parameters were tuned to fit circuit simulation results.
Based on this new model, we formulated an optimization problem for the maximization of the mutual information between the TX and RX signals subject to constraints imposed on the peak amplitude of the transmitted signal and the average harvested power at the IoT RX.
We derived a feasibility condition for this optimization problem, and for high and low required harvested powers, we derived the achievable information rate numerically and in closed form, respectively.
Our simulation results revealed a tradeoff between the achievable mutual information and the average harvested power.
Finally, we observed that the proposed TX signal design significantly outperforms a baseline scheme where truncated non-negative centered Gaussian distributed transmit symbols were employed.

\begin{figure}[!t]
	\centering
	\includegraphics[draft = false, width=0.44\textwidth]{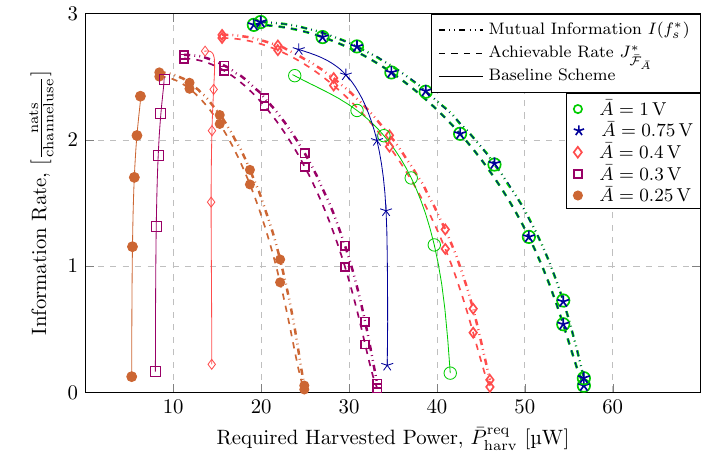}
	\vspace*{-7pt}
	\caption{Achievable rate-power regions for different values of $\bar{A}$.}
	\label{Fig:Results_Tradeoffs}
	\vspace*{-10pt}
\end{figure}

\appendices
	\renewcommand{\thesection}{\Alph{section}}
	\renewcommand{\thesubsection}{\thesection.\arabic{subsection}}
	\renewcommand{\thesectiondis}[2]{\Alph{section}:}
	\renewcommand{\thesubsectiondis}{\thesection.\arabic{subsection}:}	
\vspace*{-5pt}
	\section{Proof of Proposition \ref{Prop:Feasibility}}
	\label{Appendix:PropFeasibility}
	We note that for a given maximum transmit signal amplitude $\bar{A}$, the instantaneous harvested power can not exceed $\bar{P}_\text{\upshape max}$.
Then, the average power harvested at the RTD-based RX is upper-bounded by $\max_{f_s \in {\mathcal{F}}_{\bar{A}}} \bar{P}_\text{harv}(f_s) = \bar{P}_\text{\upshape max}$.
Thus, for any $\bar{P}^\text{req}_\text{harv} > \bar{P}_\text{\upshape max}$, a solution of (\ref{Eqn:GeneralOptimizationProblem}) does not exist.
On the other hand, for any $\bar{P}^\text{req}_\text{harv} \in[0,  \bar{P}_\text{\upshape max}]$, there exists at least one pdf $f^1_s = \delta(s-s_0) \in \mathcal{F}_{\bar{A}}$, where $s_0 \leq {\bar{A}}$ is chosen such that $\psi(|hs_0|^2) = \bar{P}_\text{harv}^\text{req}$, which satisfies constraint (\ref{Eqn:GeneralOptConstr1}) with equality.
This concludes the proof.
	\vspace*{-5pt}
	\section{Proof of Proposition \ref{Prop:UniformDistribution}}
	\label{Appendix:UniformDistribution}
	First, we note that if the average power constraint (\ref{Eqn:GeneralOptConstr1}) in the definition of $\bar{\mathcal{F}}_{\bar{A}}$ is not present, the differential entropy $h_x(f_s)$, and hence, function $J(f_s)$ with $f_s \in \bar{\mathcal{F}}_{\bar{A}}$ are maximized if the pdf of $x$ is uniform and given by $f_x^*$ \cite{Lapidoth2009}.
Furthermore, in this case, the maximum achievable information rate and the average harvested power can be expressed as $J(f_s) = J^*_{\bar{\mathcal{F}}_{\bar{A}}}$ and $\bar{P}^\text{\upshape req}_\text{\upshape harv} = \mathbb{E}_x\{x^2\} = \frac{1}{3} \bar{P}_\text{\upshape max}$, respectively.
Thus, for $\bar{P}^\text{\upshape req}_\text{\upshape harv} \leq \frac{1}{3} \bar{P}_\text{\upshape max}$, constraint (\ref{Eqn:GeneralOptConstr1}) can be relaxed for the maximization of $J(\cdot)$ and the optimal distribution of $s$ that yields $J^*_{\bar{\mathcal{F}}_{\bar{A}}}$ can always be obtained.
This concludes the proof.
	\vspace*{-5pt}
	\section{Proof of Proposition \ref{Prop:OptimalSolution}}
	\label{Appendix:OptimalSolution}
	First, since $x$ is a deterministic function of input signal $s$, we obtain $J^*_{\bar{\mathcal{F}}_{\bar{A}}}$ for given ${\bar{A}}$ and $\bar{P}^\text{\upshape req}_\text{\upshape harv} \geq \frac{1}{3} \bar{P}_\text{\upshape max}$ as follows:\vspace*{-5pt}
\begin{equation}
	J^*_{\bar{\mathcal{F}}_{\bar{A}}} = \max_{f_x \in \hat{\mathcal{F}}} \; J_x(f_x),
	\label{Eqn:Prop3Eqn1} \vspace*{-5pt}
\end{equation}
\noindent where $\hat{\mathcal{F}} = \{f_x \, \vert \, \mathcal{D}\{f_x\} \in [0,\sqrt{\bar{P}_\text{\upshape max}}], \int_{x}f_x(x)\, \text{d}x = 1, \mathbb{E}_x\{x^2\} \geq \bar{P}^\text{\upshape req}_\text{\upshape harv} \}$ and \vspace*{-5pt}
\begin{equation}
	J_x(f_x) = \frac{1}{2} \ln\big(1 + \frac{e^{2 \tilde{h}_x(f_x)}}{2\pi e \sigma^2 }\big).
	\label{Eqn:Prop3Eqn2}\vspace*{-5pt}
\end{equation}
\noindent\hspace*{0pt}Here, $\tilde{h}_x(f_x) = -\int_x f(x) \ln\big(f(x)\big) \text{d}x$ is the differential entropy of $x$ expressed as a function of pdf $f_x$.
Furthermore, since function $J_x(\cdot)$ is monotonically increasing in $\tilde{h}_x(\cdot)$, for $\bar{P}_\text{\upshape harv}^\text{\upshape req} \in [\frac{1}{3} \bar{P}_\text{\upshape max}, \bar{P}_\text{\upshape max}]$, the pdf $f^*_x \in \hat{\mathcal{F}}$ solving (\ref{Eqn:Prop3Eqn1}) is the maximum entropy distribution, i.e., $f^*_x$ yields the maximum value of $\tilde{h}_x(\cdot)$ among all pdfs satisfying $\mathcal{D}(f_x) \in [0, \sqrt{\bar{P}_\text{\upshape max}}]$ and $\mathbb{E}_x\{x^2\} \geq \bar{P}^\text{\upshape req}_\text{\upshape harv}$ \cite{Lapidoth2009}.
Exploiting the Karush–Kuhn–Tucker (KKT) conditions, it can be shown that the optimal pdf as solution of (\ref{Eqn:Prop3Eqn1}) is given by $f^*_x(x)$ in Proposition~\ref{Prop:OptimalSolution}.
Moreover, the corresponding differential entropy of $x$ is given by\vspace*{-5pt}
\begin{equation}
	h_x(f_x^*) = -\int_{x} f^*_x(x) \ln(f^*_x(x)) \text{d} x = {\mu}_0 - {\mu}_2 \bar{P}^\text{\upshape req}_\text{\upshape harv}.
	\label{Eqn:Prop3Eqn3}\vspace*{-5pt}
\end{equation}
Finally, substituting (\ref{Eqn:Prop3Eqn3}) into (\ref{Eqn:Prop3Eqn2}), we obtain (\ref{Eqn:OptMI}).
This concludes the proof.

\vspace*{-7pt}
\bibliographystyle{IEEEtran}
\bibliography{WPT_THz.bib}

\end{document}